\documentclass[conference]{IEEEtran}
\IEEEoverridecommandlockouts

\usepackage[T1]{fontenc}
\usepackage{graphicx}
\usepackage{cite}
\usepackage{amsmath,amssymb,amsfonts}
\usepackage{algorithmic}
\usepackage[caption=false,font=footnotesize]{subfig}
\usepackage{textcomp}
\usepackage{xcolor}
\usepackage{booktabs}
\usepackage{multirow}
\usepackage{amsthm}
\usepackage{tabularx}
\usepackage{enumitem}
\usepackage[most]{tcolorbox}
\usepackage[hyphens]{url}
\usepackage[hidelinks]{hyperref} 
\def\BibTeX{{\rm B\kern-.05em{\sc i\kern-.025em b}\kern-.08em
    T\kern-.1667em\lower.7ex\hbox{E}\kern-.125emX}}

\newtheorem{theorem}{Theorem}

\begin{document}

\title{
Selecting Offline Reinforcement Learning Algorithms for Stochastic Network Control
}


 \author{\IEEEauthorblockN{
 Nicolas Helson, Pegah Alizadeh, Anastasios Giovanidis}
 \IEEEauthorblockA{\textit{Ericsson Research, 
 Massy, France} \\
 email: nicolas.helson@gmail.com, \{firstname.lastname\}@ericsson.com}
 }
\maketitle

\begin{abstract}
Offline Reinforcement Learning (RL) is a promising approach for next-generation wireless networks, where online exploration is unsafe and large amounts of operational data can be reused across the model lifecycle. However, the behavior of offline RL algorithms under genuinely stochastic dynamics—inherent to wireless systems due to fading, noise, and traffic mobility—remains insufficiently understood. We address this gap by evaluating Bellman-based (Conservative Q-Learning), sequence-based (Decision Transformers), and hybrid (Critic-Guided Decision Transformers) offline RL methods in an open-access stochastic telecom environment (\texttt{mobile-env}). Our results show that Conservative Q-Learning consistently produces more robust policies across different sources of stochasticity, making it a reliable default choice in lifecycle-driven AI management frameworks. Sequence-based methods remain competitive and can outperform Bellman-based approaches when sufficient high-return trajectories are available.
These findings provide practical guidance for offline RL algorithm selection in AI-driven network control pipelines, such as O-RAN and future 6G functions, where robustness and data availability are key operational constraints.

\end{abstract}

\begin{IEEEkeywords}
Offline Reinforcement Learning, 
Network Management, Autonomous Networks, AI Lifecycle Management, Decision Transformers, Stochastic Network Environments, O-RAN
\end{IEEEkeywords}

\section{Introduction}
Next generation wireless networks are built with zero-touch capabilities incorporating full automation, especially for parameter tuning, where learning-based control functions are increasingly deployed and updated as part of managed AI lifecycles (e.g. in O-RAN and future 6G architectures).  Reinforcement Learning (RL) is the key towards achieving the envisioned adaptability and resilience. Value-based and policy gradient methods are able to successfully solve high-dimensional decision-making tasks~\cite{mnih2015human,haarnoja2018soft}. 
However, use of online RL \cite{Yang2020infocom} is challenging because it requires direct interaction with the real or a simulated environment in order to sufficiently explore the impact of the available tuning options. Exploration in real environments can be expensive and unsafe, because a suboptimal configuration risks degrading the network performance and severing the offered service. This is why in practice training is done on simulators which mimic reality with several levels of abstraction. Even on simulators, however, exploration can be slow, because of the time needed to evaluate the outcome of a certain parameter tuning in various network Key Performance Indicators (KPIs). Besides, the trained policy when deployed needs further fine-tuning using real data and exploration in order to adapt to real environment  specificities.

On the other hand, wireless network operators are able to collect large amounts of data from the deployed network in cities and rural areas. These data are often summary KPIs collected with some periodicity or when a change in the network configuration occurs. 
These datasets take the form of multi-dimensional time series describing sequences of states (observation KPIs), actions (parameter tuning), and rewards (target KPIs). Such structured trajectories naturally fit the Markov Decision Process (MDP) framework, making them suitable for training an RL agent. To this aim, offline RL \cite{levine2020offline} emerges as a natural fit, as it can leverage these pre-collected datasets to derive new high-performing  policies, without the need for explicit exploration. This perspective is consistent with the AI/ML lifecycle defined by the O-RAN Alliance, where data collection, data preparation, and model training are explicitly separated stages in the AI/ML workflow for network control functions~\cite{oranWG2AIML}.



Naturally, offline RL comes with new challenges that directly impact algorithm selection and deployment decisions. By relying on fixed datasets collected through some \textit{behavioral policy}, it can lead to inaccurate value estimates for actions that are rarely or never observed \cite{levine2020offline}, referred as action \textit{out-of-distribution} (OOD) problem. 
Additionally, in order for any offline RL method to learn a meaningful policy, the dataset should contain sufficient transitions to high reward regions.

There are two important approaches to treat these issues, one based on traditional Bellman-based formulation, the other on sequential methods, each implying different trade-offs for robustness, data-requirements and operational suitability. \textit{Conservative Q-Learning} (CQL) \cite{kumar2020conservative} is a Bellman-based method, which addresses some of these issues by regularizing the value function to reduce overestimation. More recently, \textit{Decision Transformers} (DTs)~\cite{chen2021decision} and their variations such as Critic-Guided DT~\cite{wang2024critic} have emerged as a promising alternative. By framing RL as a conditional sequence modeling problem, DTs bypass explicit bootstrapping and instead predict actions conditioned on the desired future cumulated rewards, or \textit{return-to-go}, leveraging the advances of large-scale transformer models.

The inherent stochasticity of telecommunication environments poses an additional  significant challenge for offline RL, directly affecting the reliability of autonomous decision making in live networks. This can manifest in different components of the MDP, such as in the \textit{initial state distributions}, \textit{state/action transitions} or \textit{reward} function, with varying degrees of intensity and structure. This is called aleatoric uncertainty in \cite{lockwood2022reviewuncertaintydeepreinforcement}, i.e., uncertainty inherent to the environment that cannot be reduced. Another type of uncertainty, epistemic uncertainty, stems from limitations of the model or the training data and can, in principle, be reduced with additional data or improved modeling. 




\textbf{Contributions:} While prior work has compared Decision Transformers (DT) against Bellman-based methods such as CQL~\cite{omori2025should,bhargava2023should}, these either neglect stochasticity~\cite{omori2025should} or evaluate it after training on deterministic data~\cite{bhargava2023should}. An important first contribution of our work is to compare these methods on an environment with naturally incorporated stochasticity. We conduct experiments on a telecommunication-specific framework, \texttt{mobile-env}~\cite{schneider2022}, which enables controlled stochasticity representative of real-world wireless multi-cell scenarios. The two types of studied stochasticity are: traffic mobility and channel fading.

Most importantly, this study extends the comparison between CQL and DT across different types and levels of aleatoric uncertainty by including sequential methods that have been shown to outperform standard DTs. In particular, we evaluate a DT variant augmented with critic objectives, Critic-Guided DT~\cite{wang2024critic}. For completeness, we also provide a dataset ablation study to assess the impact of epistemic uncertainty, under high stochasticity.

Specifically, our primary question is: \textbf{How does stochasticity affect the performance and deployment of offline RL algorithms for wireless communications?}
To address this, we pose and answer a list of practical sub-questions evaluating the behavior of both Bellman-based and sequence-based offline RL methods:
\begin{itemize}

    \item How is the telecom-agents' performance perturbed by user mobility (state transition stochasticity)? (Section~\ref{subsec:exp-mobile-env-mobility})

    \item How does dataset quality 
    impact performance of the telecom-agents under user mobility?  (Section~\ref{subsec:exp-mobile-env-ablation}) 
    

    \item How is the telecom-agents' performance perturbed by channel fading (reward stochasticity)? (Section~\ref{subsec:exp-mobile-env-fading})  

\end{itemize}




\section{Related Work}
Offline RL 
is increasingly being applied in wireless communication systems. 
Recent work~\cite{10529190} studies the impact of dataset composition on offline RL performance for wireless systems. The study of~\cite{11140155} compares online and offline RL approaches, including DT and CQL, for the downlink adaptation problem. Similarly,~\cite{yang2023advancingranslicingoffline} investigates the use of offline RL—particularly CQL—for RAN slicing optimization. The work in \cite{alizadeh2025handover} learns optimal handover parameter tuning using DT and CQL. However, none of these works explicitly consider stochasticity as a major challenge in offline RL for telecommunications.


The Decision Transformer (DT)~\cite{chen2021decision} represents a prominent class of sequential approaches to offline RL. However, DT faces two main limitations: a weak stitching ability (i.e., difficulty combining suboptimal trajectory segments into improved ones) \cite{wang2024critic,yamagata2023q}, and sensitivity to stochasticity introduced by return conditioning \cite{paster2022you}. It can fail when high returns result from luck rather than consistently optimal behavior. 
%
Several methods have been proposed to address the limitations of DT, typically by incorporating a critic trained using Bellman equation principles. Q-learning DT~\cite{yamagata2023q} adopts CQL-style return relabeling to reduce overestimation bias, while Critic-Guided DT (CGDT)~\cite{wang2024critic} pretrains a critic to guide policy learning, improving the stitching of suboptimal trajectories and mitigating reliance on “lucky” high-return samples. ESPER~\cite{paster2022you} clusters trajectory data to enhance robustness to stochasticity, and ACT~\cite{Gao_Wu_Cao_Kong_Zhang_Yu_2024} introduces advantage conditioning to stabilize training.


There are alternative Bellman-based methods for efficient offline RL. Parallel to Conservative Q-Learning (CQL)~\cite{kumar2020conservative}, introduced earlier, 
Implicit Q-Learning (IQL)~\cite{kostrikov2021offline} mitigates the action OOD problem~\cite{levine2020offline} by restricting value updates to in-distribution actions through an expectile regression objective. 

Offline RL performance under stochasticity has been examined by introducing noise at test time~\cite{bhargava2023should}. Other studies~\cite{paster2022you,Gao_Wu_Cao_Kong_Zhang_Yu_2024} provide limited comparisons between CQL and DT in stochastic settings. The work in~\cite{paster2022you} studies state-transition stochasticity; it is shown that CQL outperforms DT on a low-stochastic Connect Four environment, while DT surpasses CQL on the 2048 game. 
ACT~\cite{Gao_Wu_Cao_Kong_Zhang_Yu_2024} introduces noise into selected actions within D4RL benchmarks, finding that CQL consistently outperforms DT. 


\section{Studied algorithms}
Reinforcement Learning (RL) is grounded in the framework of Markov Decision Processes (MDPs), which consists of states representing the environment, actions that influence state transitions, and rewards that quantify the desirability of outcomes. The state/action-transition probabilities and reward functions are typically unknown or intractable, motivating the data-driven nature of model-free RL. Importantly, MDPs can exhibit different forms of stochasticity. The most common source is the \textit{state transition stochasticity}, where taking a given action in a given state may lead to multiple possible next states according to a probability distribution. Another source is \textit{reward stochasticity}, where the received reward is itself a random variable conditioned on the state and action. While less common in tested environments, in fact reward stochasticity naturally occurs in telecommunication scenarios, principally due to wireless channel fading; as a consequence the received bitrate (reward) is not deterministic. Further sources of randomness include \textit{initial state stochasticity}, because environments are often initiated with a random state; but in telecom environments it is less critical due to the continuous and uninterrupted  operation of the network. 

The objective of RL is to learn a policy that maps each state to an action so as to maximize the expected cumulative future reward, known as the \textit{return}. A related concept often used in sequential methods is the \textit{return-to-go}, defined as the sum of rewards from the current time step $t$ until the end of the trajectory $T$: $\sum_{i=t}^{T} r_i$. 
We next give more detailed insights about the algorithms we study.

\textbf{CQL~\cite{kumar2020conservative}.} 
This follows a classical RL paradigm in which the policy is derived from a learned value function estimating the expected cumulative reward from each state–action pair. To address the action OOD problem during value learning, the authors add a conservative penalty term to the standard Bellman error objective. This penalty discourages assigning high values to unseen actions, while keeping the values of in-distribution actions anchored.

\textbf{DT~\cite{chen2021decision}.}
Concurrently, DT introduces a family of sequence-modeling approaches that do not learn value functions. Instead, it formulates RL as predicting the next action conditioned on the past trajectory. Unlike standard RL policies that rely only on the current state, DT conditions on previous states, previous actions, and also future information. The policy is learned via supervised training of a transformer, which can attend to relevant past elements and key future signals: the return-to-go. At inference time, a desired return target is provided, and the model attempts to produce actions that achieve it. This enables DT to follow different target returns, whereas Bellman-based methods focus on learning a single optimal policy. To allow the model to represent the diverse behaviors in the dataset, a \textit{context length} $K$ is used during both training and inference, restricting how much past information is fed into the model. Larger contexts encourage imitation of dataset trajectories, whereas smaller ones lead to more flexible action predictions.

\textbf{CGDT~\cite{wang2024critic}.} Learning to follow a return goal can be limiting, especially when the dataset contains suboptimal trajectories that include good actions but yield low returns, or in stochastic environments where high returns may result from luck. To address this, CGDT first learns a critic in a supervised manner, which predicts the expected return of an action given the past trajectory context. The DT objective is then modified to balance two components: the vanilla DT supervised action prediction loss conditioned on past states and target returns, and an additional critic-guided term encouraging the model to favor actions predicted to outperform the dataset’s returns. This allows the model to go beyond imitation and aim for returns higher than those observed in the data.

\section{Telecom Environment}



\begin{figure}[t]
    \centering
    \begin{minipage}{0.40\textwidth}
        \centering
        \includegraphics[width=\linewidth]{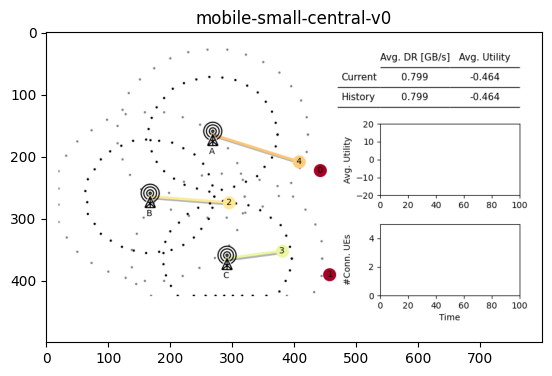}
        \caption{\texttt{Mobile-env} environment illustration.}
        \label{fig:mobile-env}
    \end{minipage}\hfill
    \begin{minipage}{0.40\textwidth}
        \centering
        \includegraphics[width=\linewidth]{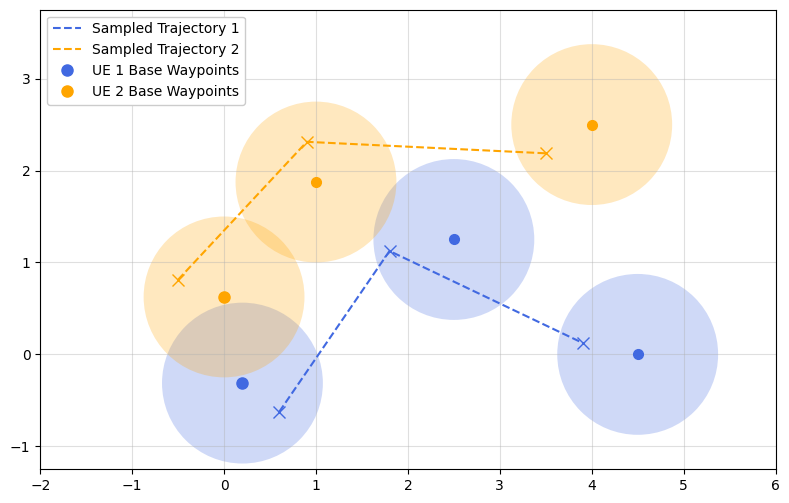}
        \caption{Low-mobility variant of the Random Waypoint model.}
        \label{fig:mov_stochasticity_plots}
    \end{minipage}
\end{figure}

We consider a telecom environment based on \texttt{mobile-env}~\cite{schneider2022}, an open and  lightweight cellular network simulator, designed to allow for user mobility and dynamic user–base station associations
, illustrated in Figure~\ref{fig:mobile-env}. We adapt the original\footnote{Source code: \url{https://github.com/stefanbschneider/mobile-env/tree/main}} environment by replacing its exponentially large action space—which scales with the number of users—with a more realistic formulation using association thresholds, and just three possible actions per base station. This modification is consistent with realistic parameter tuning, while making optimization tractable. The modified action space  version is presented here.


The \texttt{mobile-env} environment consists of multiple base stations (BSs) and user equipments (UEs), where each UE connects to one or more BSs. Users have random mobility trajectories, making the problem challenging. The objective is to maximize the average user utility, defined as a function of the received data rate by the user from the connected station. Intuitively, connecting a user to multiple BSs simultaneously can increase its data rate, but sharing resources among connected users lowers everyone’s rates, creating a load–performance trade-off.


Consider one BS and the users connected to it. A user’s data rate with this BS depends on the Signal-to-Noise-Ratio (SNR), i.e. the signal strength it experiences and on the BS’s scheduling strategy, which governs how resources are shared among connected users. 
In this work, we use the RateFair scheduling strategy, which assigns equal rates to all connected users, meaning that distant users require more resources than nearby ones to achieve the same rate.
The final data rate of a user is then obtained by summing its data rates across all BSs it is connected to.

Data rates depend on user connections with base stations. 
In our version of \texttt{mobile-env} each base station has an SNR threshold controlling which users connect, enabling load regulation by raising or lowering it. In simple words, when a user's SNR exceeds a station's threshold, then the user connects to it. If the threshold is set too high no user connects and reward is $0$. If it is too low, all users connect to all stations and reward is again low because of resource sharing.


We now turn to the RL formulation of \texttt{mobile-env}. We set 5 UEs and 3 BSs to study a small yet sufficiently stochastic and meaningful scenario.


\begin{itemize}
    \item \underline{State Space:} The state includes the current connection thresholds for all base stations (BSs), the current SNR values for each BS-user pair and the users' previous utility. The resulting state dimension is $|B_s| + |U_e||B_s| + |U_e| = 23$ where $|U_e|$ and $|B_s|$ are number of UEs and BSs in the system.
    
    \item \underline{Action Space:} It is discrete. For each base station, the action consists of either slightly increasing, slightly decreasing, or keeping the current threshold. In practice, the update step is set to one-tenth of the total SNR range. The resulting action space thus has $|B_s|$ dimensions and size $3^{|B_s|}=9$.

    \item \underline{Reward:} The reward is defined as the average utility across all UEs. As discussed above, the utility is monotone increasing with the bitrate; the latter depends on the SNR, the user-BS association, and the resources allocated by the BS. Rewards can be stochastic due to channel fading (see below). For more details see the appendix.
\end{itemize}


In this work, we study two sources of stochasticity: 
(i) user mobility, which introduces randomness in the state transition function, and 
(ii) channel fading, which introduces randomness in the reward function. 

\textbf{User Mobility:}
 follows a Random Waypoint (RWP) model~\cite{bettstetter2004rwp}. 
Each user moves at a fixed speed toward randomly sampled waypoints, and new waypoints are drawn once the previous ones are reached. This induces stochastic state transitions since user motion affects future SNR values. To study reduced transition stochasticity, we further define and include in the environment a \emph{low-mobility} variant. Here, each waypoint is drawn within a restricted region around sampled locations, thereby reducing variability in their trajectories. This mechanism is illustrated in Figure~\ref{fig:mov_stochasticity_plots}.


\textbf{Channel Fading:}
is modeled as a random perturbation to the users’ SNR, reflecting effects such as obstacles and multipath propagation. Fading was not originally included in the code, although prevalent in cellular communications, so we included it. We focus on Rayleigh fading, which represents highly stochastic conditions with no line-of-sight component. Fading introduces only reward stochasticity: at each step, a fading coefficient is sampled for every UE–BS pair, and the SNR in the reward computation is scaled by the square of this coefficient. State SNRs remain unperturbed, reflecting the use of averaged measurements in practical systems. Thus, fading affects rewards without altering state transitions.



\section{Experiments}
The first step in offline RL is to collect an offline dataset, otherwise assumed available in real applications. Following common practice in the literature for Gym environments and \texttt{D4rl} dataset ~\cite{fu2020d4rl}, we generated those datasets by training an online RL agent. We used the Double DQN algorithm 
from \texttt{d3rlpy}~\cite{seno2022d3rlpy} for the discrete-actions setting of \texttt{mobile-env}. This procedure yields an expert policy, and by saving model parameters from an earlier stage of training, we obtained a medium performing policy. We then form a medium-expert dataset, containing trajectories sampled using expert policy and trajectories sampled using medium policy. In \texttt{mobile-env}, we built a dataset of 500 expert and 500 medium trajectories of horizon length 100, totaling 100k steps. 
We evaluate trained agents online by running 30 trajectories and reporting the return mean and standard deviation.
 
For readability, we rescale scores to the range $[0,100]$, using the expert’s mean performance as the upper bound and a random policy’s mean performance as the lower bound, following standard offline RL practice~\cite{fu2020d4rl}. Rescaling also enables comparison across different settings (e.g., stochasticity levels).

\subsection{How is the telecom-agents' performance perturbed by user mobility (state transition stochasticity)?}
\label{subsec:exp-mobile-env-mobility}

We now analyze the effect of state-transition stochasticity in \texttt{mobile-env}. 
Using the above described medium–expert datasets and evaluation 
, we train all algorithms under two mobility conditions—limited and high mobility. 
Figure~\ref{fig:data-distribution-mobile-env-mobility} shows the return distributions of the training datasets. 
In the high-mobility setting, the expert and medium returns overlap much more despite similar average gaps, meaning the medium policy can occasionally outperform the expert by chance. 
This amplifies the lucky-return problem and makes the stochastic setting particularly challenging, especially for sequential methods.


\begin{figure}[t]
    \centering
    \subfloat[Limited mobility stochasticity.]{%
        \includegraphics[width=0.48\linewidth]{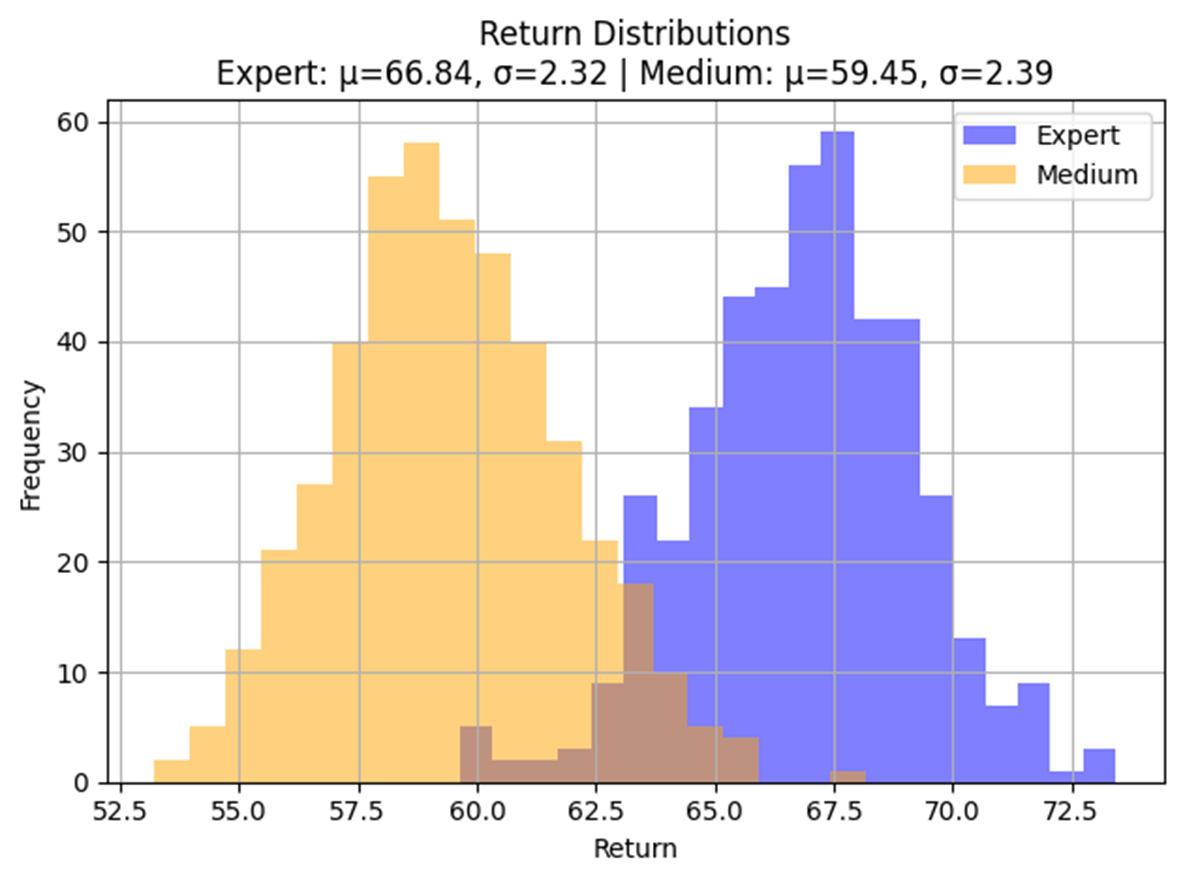}%
        \label{fig:data-dist-lim}
    }
    \hfill
    \subfloat[High mobility stochasticity.]{%
        \includegraphics[width=0.48\linewidth]{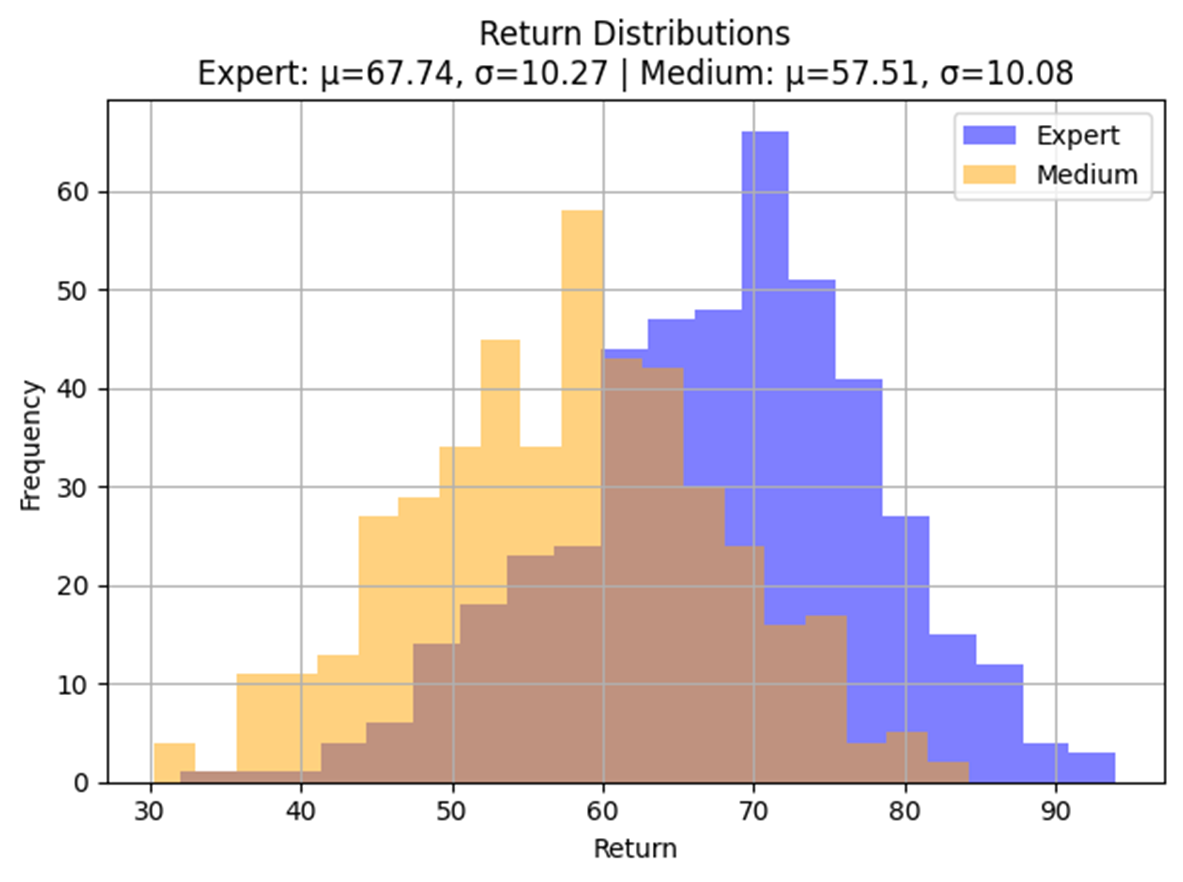}%
        \label{fig:data-dist-high}
    }
    \caption{Return distributions of the training trajectories for \texttt{mobile-env} under different mobility stochasticity settings.}
    \label{fig:data-distribution-mobile-env-mobility}
\end{figure}

\begin{table}[ht!]
\centering
\caption{
        Rescaled mean return (standard deviation in parentheses) under limited and high mobility stochasticity in \texttt{mobile-env}.}
    \begin{minipage}{0.9\linewidth}
        \centering
        \scriptsize
        \begin{tabular}{lccc}
        \toprule
        Model & Limited Stochasticity & High Stochasticity & $\Delta$ (High–Limited) \\
        \midrule
        DT   & 95.1 (9.71) & 81.5 (51.6) & $-13.6$ \\
        CGDT & \textbf{95.6 (9.75)} & 83.0 (51.2) & $-12.6$ \\
        CQL  & 94.6 (13.1) & \textbf{84.8 (46.9)}  & $\mathbf{-9.8}$ \\
        \bottomrule
        \end{tabular}
        \label{tab:results_mob_stoch_comparison}
    \end{minipage}
\end{table}

    


Having characterized the datasets, we now turn to the algorithm results (Table~\ref{tab:results_mob_stoch_comparison}). 
Under limited mobility, all methods achieve similar performance, as expected in a low-stochasticity regime. 
CGDT ($L=5$, $\tau_c=0.5$, $\tau_p=0.8$, $\alpha=0.1$) shows a small but encouraging improvement over DT ($L=5$), while CQL ($\alpha=1.0$, rate $\ell_r=3\cdot 10^{-4}$) exhibits higher variance. 
One explanation is that limited mobility restricts state–action diversity, making it harder for CQL to learn a reliable value function without relying on out-of-distribution actions.

In the high-mobility setting, all methods again reach similar mean performance but with larger variances due to diverse movement patterns. 
Here CQL ($\alpha=0.1$, rate $\ell_r=3\cdot 10^{-4}$) outperforms sequential models, which aligns with its value-based robustness in stochastic environments. 
Increased mobility also exposes more of the state–action space, improving CQL’s value estimates and mitigating OOD issues. 
CGDT ($L=10$, $\tau_c=0.5$, $\tau_p=0.8$, $\alpha=0.2$) continues to show a small but consistent gain over DT ($L=10$). 
Considering the change in performance between limited and high mobility, 
all methods exhibit a substantial drop: DT decreases by 13.6, CGDT by 12.6, and CQL by 9.8. This confirms that mobility-induced stochasticity adversely affects offline RL performance, while also highlighting CQL’s greater robustness relative to sequential methods.

Another notable observation is that sequential methods, particularly DT, still achieve strong performance compared to CQL. This contrasts with 
findings commonly reported in the literature~\cite{bhargava2023should},~\cite{omori2025should}. 
By processing  longer input sequences, sequential models can exploit past state (SNR) trends to anticipate future values; this is especially useful under straight-line mobility, giving them an advantage over Bellman-based methods that rely only on the current state (SNR). 
The dataset may also contain enough expert trajectories to limit the influence of noisy suboptimal ones. 
To examine these factors and assess epistemic uncertainty, we perform an ablation study by progressively removing medium and expert data.

\begin{tcolorbox}[colback=gray!10, colframe=black!20, boxrule=0.3pt, arc=1pt, left=4pt, right=4pt, top=2pt, bottom=2pt]
\textbf{Key result 1:} Agents are negatively affected by user mobility, as higher state transition stochasticity leads to a performance decrease for all methods. However, CQL remains the most robust, showing the highest absolute performance and the smallest performance drop. 
\end{tcolorbox}



\subsection{How does dataset quality impact performance of the telecom-agents under user mobility? (epistemic uncertainty)}
\label{subsec:exp-mobile-env-ablation}

To investigate how data quantity and quality affect the performance of offline RL algorithms under high mobility stochasticity in \texttt{mobile-env}, we conducted an ablation study by progressively removing portions of data from the training sets. We thus increased epistemic uncertainty, under high aleatoric uncertainty.
We used the initially  collected dataset and methodology from the beginning of the section 
, and evaluated three variations: removing 50\% of expert data (\textit{--50\% expert}), 50\% of medium data (\textit{--50\% medium}), and 50\% of both (\textit{--50\% both}). Results are shown in Table~\ref{tab:ablation-study}.  Results are scaled 
using the expert mean  from the complete unreduced dataset. $\Delta$ values indicate the performance difference with respect to scaled results training on the full dataset.

\begin{table}[h!]
\centering
\caption{Ablation study under the \textbf{high mobility} stochasticity.}

\begin{minipage}{0.9\linewidth}
\centering
\scriptsize
\begin{tabular}{llccc}
\toprule
Setting & Model & Mean Return (Std) & $\Delta$ vs. Full \\
\midrule
Full dataset        & DT   & 81.5 (51.6) & - \\
                    & CGDT & 83.0 (51.2) & - \\
                    & \textbf{CQL}  & \textbf{84.8 (46.9)}  & - \\
\hline
-50\% medium/expert & DT   & 73.2 (53.3) & $-8.3$ \\
                    & CGDT & 78.1 (55.6) & $-4.9$ \\
                    & \textbf{CQL}  & \textbf{81.0 (51.2)} & $\mathbf{-3.8}$ \\
\hline
-50\% expert        & DT   & 59.9 (52.8) & $-21.6$ \\
                    & CGDT & 67.9 (52.6) & $-15.1$ \\
                    & \textbf{CQL}  & \textbf{81.4 (48.3)}  & $\mathbf{-3.40}$ \\
\hline
-50\% medium        & DT   & 81.7 (50.5) & $+0.20$ \\
                    & \textbf{CGDT} & \textbf{83.5 (53.4)} & $\mathbf{+0.50}$ \\
                    & CQL  & 82.3 (52.5) & $-2.50$ \\
\bottomrule
\vspace{1pt}
\end{tabular}
\label{tab:ablation-study}
\end{minipage}

\end{table}

First, CQL remains highly stable: it performs well across all settings, demonstrating robustness to stochasticity in various epistemic uncertainty settings, particularly when expert trajectories are limited. Sequential methods, show less robustness. They struggle when expert data is reduced (\textit{--50\% expert} and \textit{--50\% both}), but perform well when only medium data is removed. CGDT consistently outperforms DT across all settings. Overall, CQL achieves the best performance, except in the \textit{--50\% medium} case, where CGDT slightly surpasses it.

In fact, CQL is more sensitive to dataset size than to dataset composition (i.e., the balance between expert and medium trajectories). It achieves its highest performance with the full dataset and its lowest with \textit{--50\% both}, while both \textit{--50\% medium} and \textit{--50\% expert} reductions lead to moderate declines.

For sequential methods, data quality seems to matter more than data quantity. Interestingly, removing 50\% of the medium data does not harm DT or CGDT performance and even slightly improves it. Moreover, when both expert and medium data are reduced (\textit{--50\% both}), these methods perform better than when only expert data is reduced (\textit{--50\% expert}). Too few expert examples may increase the risk of return-conditioning overfitting to “lucky” medium trajectories. On the other hand, having a sufficient proportion of expert data stabilizes learning. This aligns with prior work~\cite{paster2022you} showing that sequential methods can be misled by high-return but unrepresentative trajectories, and suggests a practical mitigation: include enough expert data to limit this bias.

\begin{tcolorbox}[colback=gray!10, colframe=black!20, boxrule=0.3pt, arc=1pt, left=4pt, right=4pt, top=2pt, bottom=2pt]
\textbf{Key result 2:} 
In the high-mobility stochastic setting, CQL outperforms sequential methods across all epistemic-uncertainty configurations, except when medium data is reduced, where CGDT leads. CQL is more sensitive to data quantity, while sequential methods are more affected by data quality.
\end{tcolorbox}

\subsection{How is the telecom-agents' performance perturbed by channel fading (reward stochasticity)?}
\label{subsec:exp-mobile-env-fading}

To examine the impact of reward-level stochasticity, we next study Rayleigh channel fading 
on top of the high-mobility setting, since mobility alone did not reveal substantial performance differences between algorithms and is intrinsically part of the \texttt{mobile-env} framework. As before, we use the dataset sampling, evaluation procedure, and scaling described in the beginning of this  Section. 


\begin{table}[h!]
\centering
\caption{Performance under high mobility with Rayleigh fading.} 
\begin{minipage}{0.9\linewidth}
\centering
\scriptsize
\begin{tabular}{lcccc}
\toprule
Setting & Model & Mean Return (Std) & $\Delta$ vs. no fading\\
\midrule
           & DT   & 47.71  (65.57) & -33.8\\
Rayleigh   & CGDT & 80.38  (62.14) & -2.62 \\
           & \textbf{CQL}  & \textbf{93.97 (64.35)} & \textbf{+9.17} \\
\bottomrule
\end{tabular}
\label{tab:fading-results}
\end{minipage}
\end{table}

Table~\ref{tab:fading-results} reports the rescaled performance of the offline RL algorithms under Rayleigh fading, along with their differences relative to the no-fading case. Scaling is done using mean expert performance from high mobility and Rayleigh fading dataset. $\Delta$ values indicate gap with scaled values (with expert mean from high mobility dataset) from no fading case. CQL remains the most consistent method, achieving the highest mean return and lowest variance, indicating a relative insensitivity to reward stochasticity and confirming the robustness of value-based learning under fluctuating rewards. Sequential methods behave differently: DT performs poorly, likely because strong reward randomness obscures the distinction between good and bad actions, while CGDT offers a substantial improvement over DT but still falls short of CQL.

The deltas relative to the no-fading case provide additional insights. CQL maintains its performance under both state-transition and reward stochasticity. In contrast, DT’s performance drops sharply compared to its no-fading results, suggesting that DT tolerates moderate stochasticity but breaks down when uncertainty increases and stems from multiple sources. CGDT, however, shows much greater robustness, with only minor degradation.

\begin{tcolorbox}[colback=gray!10, colframe=black!20, boxrule=0.3pt, arc=1pt, left=4pt, right=4pt, top=2pt, bottom=2pt]
\textbf{Key result 3:} CQL remains the most consistent method under the addition of Rayleigh fading, while CGDT provides a substantial improvement over DT, maintaining similar normalized performance to the no-fading case.
\end{tcolorbox}

\section{Conclusion}
In this work, we compared offline RL algorithms under different levels of state-transition and reward stochasticity in a realistic telecommunications environment (\texttt{mobile-env}). We evaluated a Bellman-based method (CQL) and sequential approaches (DT and its critic-augmented variant, CGDT), focusing on stochasticity induced by user mobility and channel fading. Our results show that CQL is the most robust overall, while sequential methods remain competitive. From an operational and lifecycle perspective, CQL should be preferred in highly stochastic settings—especially when multiple sources of randomness are present and sufficient data is available. Sequential methods, particularly CGDT, which consistently improves over DT, are better suited for scenarios with milder stochasticity and smaller datasets with high-return trajectories, and may become attractive as data quality improves over the course of system operation.

\bibliographystyle{IEEEtran}
\bibliography{references}

@INPROCEEDINGS{Yang2020infocom,
  author={Yang, Kun and Shen, Cong and Liu, Tie},
  booktitle={IEEE INFOCOM 2020 - IEEE Conf. on Computer Communications Workshops (INFOCOM WKSHPS)}, 
  title={Deep Reinforcement Learning based Wireless Network Optimization: A Comparative Study}, 
  year={2020},
  volume={},
  number={},
  pages={1248-1253},
}

@article{alizadeh2025handover,
  title={Offline reinforcement learning for mobility robustness optimization},
  author={Alizadeh, Pegah and Giovanidis, Anastasios and Ramachandra, Pradeepa and Koutsoukis, Vasileios and Arouk, Osama},
  journal={arxiv preprint arXiv:2506.22793},
  year={2025},
}

@article{mnih2015human,
  title={Human-level control through deep reinforcement learning},
  author={Mnih, Volodymyr and Kavukcuoglu, Koray and Silver, David and Rusu, Andrei A and Veness, Joel and Bellemare, Marc G and Graves, Alex and Riedmiller, Martin and Fidjeland, Andreas K and Ostrovski, Georg and others},
  journal={nature},
  volume={518},
  number={7540},
  pages={529--533},
  year={2015},
  publisher={Nature Publishing Group}
}

@article{levine2020offline,
  title={Offline reinforcement learning: Tutorial, review, and perspectives on open problems},
  author={Levine, Sergey and Kumar, Aviral and Tucker, George and Fu, Justin},
  journal={arXiv preprint arXiv:2005.01643},
  year={2020}
}

@INPROCEEDINGS{schneider2022,
  author={Schneider, Stefan and Werner, Stefan and Khalili, Ramin and Hecker, Artur and Karl, Holger},
  booktitle={NOMS 2022-2022 IEEE/IFIP Network Operations and Management Symposium}, 
  title={mobile-env: An Open Platform for Reinforcement Learning in Wireless Mobile Networks}, 
  year={2022},
  volume={},
  number={},
  pages={},
}

@article{kumar2020conservative,
  title={Conservative q-learning for offline reinforcement learning},
  author={Kumar, Aviral and Zhou, Aurick and Tucker, George and Levine, Sergey},
  journal={Advances in neural information processing systems},
  volume={33},
  pages={1179--1191},
  year={2020}
}

@article{kostrikov2021offline,
  title={Offline reinforcement learning with implicit q-learning},
  author={Kostrikov, Ilya and Nair, Ashvin and Levine, Sergey},
  journal={arXiv preprint arXiv:2110.06169},
  year={2021}
}

@inproceedings{haarnoja2018soft,
  title={Soft actor-critic: Off-policy maximum entropy deep reinforcement learning with a stochastic actor},
  author={Haarnoja, Tuomas and Zhou, Aurick and Abbeel, Pieter and Levine, Sergey},
  booktitle={International conference on machine learning},
  pages={1861--1870},
  year={2018},
  organization={Pmlr}
}

@article{chen2021decision,
  title={Decision transformer: Reinforcement learning via sequence modeling},
  author={Chen, Lili and Lu, Kevin and Rajeswaran, Aravind and Lee, Kimin and Grover, Aditya and Laskin, Misha and Abbeel, Pieter and Srinivas, Aravind and Mordatch, Igor},
  journal={Advances in neural information processing systems},
  volume={34},
  pages={},
  year={2021}
}

@inproceedings{yamagata2023q,
  title={Q-learning decision transformer: Leveraging dynamic programming for conditional sequence modelling in offline rl},
  author={Yamagata, Taku and Khalil, Ahmed and Santos-Rodriguez, Raul},
  booktitle={International Conference on Machine Learning},
  pages={38989--39007},
  year={2023},
  organization={PMLR}
}

@inproceedings{wang2024critic,
  title={Critic-guided decision transformer for offline reinforcement learning},
  author={Wang, Yuanfu and Yang, Chao and Wen, Ying and Liu, Yu and Qiao, Yu},
  booktitle={Proceedings of the AAAI Conference on Artificial Intelligence},
  volume={38},
  number={14},
  pages={15706--15714},
  year={2024}
}

@article{paster2022you,
  title={You can’t count on luck: Why decision transformers and rvs fail in stochastic environments},
  author={Paster, Keiran and McIlraith, Sheila and Ba, Jimmy},
  journal={Advances in neural information processing systems},
  volume={35},
  pages={38966--38979},
  year={2022}
}

@article{bhargava2023should,
  title={When should we prefer decision transformers for offline reinforcement learning?},
  author={Bhargava, Prajjwal and Chitnis, Rohan and Geramifard, Alborz and Sodhani, Shagun and Zhang, Amy},
  journal={International Conference on Learning Representations (ICLR)},
  year={2024}
}

@article{fu2020d4rl,
  title={D4rl: Datasets for deep data-driven reinforcement learning},
  author={Fu, Justin and Kumar, Aviral and Nachum, Ofir and Tucker, George and Levine, Sergey},
  journal={arXiv preprint arXiv:2004.07219},
  year={2020}
}

@article{seno2022d3rlpy,
  title={d3rlpy: An offline deep reinforcement learning library},
  author={Seno, Takuma and Imai, Michita},
  journal={Journal of Machine Learning Research},
  volume={23},
  number={315},
  pages={1--20},
  year={2022}
}

@article{omori2025should,
  title={Should We Ever Prefer Decision Transformer for Offline Reinforcement Learning?},
  author={Omori, Yumi and Dong, Zixuan and Ross, Keith},
  journal={arXiv preprint arXiv:2507.10174},
  year={2025}
}

@article{bettstetter2004rwp,
  title     = {Stochastic properties of the random waypoint mobility model},
  author    = {Bettstetter, Christian and Hartenstein, Holger and P{\'e}rez-Costa, Xavier},
  journal   = {Wireless Networks},
  volume    = {10},
  number    = {5},
  pages     = {555--567},
  year      = {2004},
  publisher = {Springer},
}

@article{Gao_Wu_Cao_Kong_Zhang_Yu_2024, title={ACT: Empowering Decision Transformer with Dynamic Programming via Advantage Conditioning}, volume={38},  abstractNote={Decision Transformer (DT), which employs expressive sequence modeling techniques to perform action generation, has emerged as a promising approach to offline policy optimization. However, DT generates actions conditioned on a desired future return, which is known to bear some weaknesses such as the susceptibility to environmental stochasticity. To overcome DT’s weaknesses, we propose to empower DT with dynamic programming. Our method comprises three steps. First, we employ in-sample value iteration to obtain approximated value functions, which involves dynamic programming over the MDP structure. Second, we evaluate action quality in context with estimated advantages. We introduce two types of advantage estimators, IAE and GAE, which are suitable for different tasks. Third, we train an Advantage-Conditioned Transformer (ACT) to generate actions conditioned on the estimated advantages. Finally, during testing, ACT generates actions conditioned on a desired advantage. Our evaluation results validate that, by leveraging the power of dynamic programming, ACT demonstrates effective trajectory stitching and robust action generation in spite of the environmental stochasticity, outperforming baseline methods across various benchmarks. Additionally, we conduct an in-depth analysis of ACT’s various design choices through ablation studies. Our code is available at https://github.com/LAMDA-RL/ACT.}, number={11}, journal={Proceedings of the AAAI Conference on Artificial Intelligence}, author={Gao, Chen-Xiao and Wu, Chenyang and Cao, Mingjun and Kong, Rui and Zhang, Zongzhang and Yu, Yang}, year={2024}, month={Mar.}, pages={12127-12135} }

@misc{yang2023advancingranslicingoffline,
      title={Advancing RAN Slicing with Offline Reinforcement Learning}, 
      author={Kun Yang and Shu-ping Yeh and Menglei Zhang and Jerry Sydir and Jing Yang and Cong Shen},
      year={2023},
      eprint={2312.10547},
      archivePrefix={arXiv},
      primaryClass={cs.IT},
      url={https://arxiv.org/abs/2312.10547}, 
}

@INPROCEEDINGS{11140155,
  author={Peri, Samuele and Russo, Alessio and Fodor, Gabor and Soldati, Pablo},
  booktitle={2025 IEEE International Conference on Machine Learning for Communication and Networking (ICMLCN)}, 
  title={Offline Reinforcement Learning and Sequence Modeling for Downlink Link Adaptation}, 
  year={2025},
  volume={},
  number={},
  pages={},
}

@ARTICLE{10529190,
  author={Yang, Kun and Shi, Chengshuai and Shen, Cong and Yang, Jing and Yeh, Shu-Ping and Sydir, Jaroslaw J.},
  journal={IEEE Transactions on Wireless Communications}, 
  title={Offline Reinforcement Learning for Wireless Network Optimization With Mixture Datasets}, 
  year={2024},
  volume={23},
  number={10},
  pages={12703-12716},
}

@misc{lockwood2022reviewuncertaintydeepreinforcement,
      title={A Review of Uncertainty for Deep Reinforcement Learning}, 
      author={Owen Lockwood and Mei Si},
      year={2022},
      eprint={2208.09052},
      archivePrefix={arXiv},
      primaryClass={cs.LG},
      url={https://arxiv.org/abs/2208.09052}, 
}

@techreport{oranWG2AIML,
  title        = {O-RAN Working Group 2 AI/ML Workflow Description and Requirements},
  institution  = {O-RAN Alliance},
  number       = {O-RAN.WG2.AIML-v01.03},
  year         = {2023}
}

\section*{Appendix}
\section{Additional experiments on \texttt{LunarLander}}
\subsection{Environment and Stochasticity Setup}
As a preliminary step toward understanding stochasticity-related challenges in Offline RL, we first conducted experiments in the well-known \texttt{LunarLander} environment\footnote{\url{https://github.com/Farama-Foundation/Gymnasium/blob/main/gymnasium/envs/box2d/lunar_lander.py}} before turning to the telecom-specific \texttt{mobile-env}. We present this next.
The goal is to land a spacecraft within a designated target area on the ground. 
The main challenge lies in stabilizing the spacecraft and gradually approaching the landing pad for a safe touchdown. 
The problem becomes stochastic when an external wind is introduced, as it perturbs the spacecraft’s dynamics and affects state transitions.

The \texttt{LunarLander} environment without wind is defined as follows:
\begin{itemize}
    \item \textbf{State space:} The state vector includes the spacecraft’s position, velocity, angular position, angular velocity, and two boolean variables indicating whether the left or right leg is in contact with the ground. Each initial state is randomly sampled: velocity is randomized, while position, angle, and angular velocity are fixed.  
    \item \textbf{Action space:} The action vector has two continuous components $(a_0, a_1)$, where $a_0 \in [-1, 1]$ controls the main engine (vertical thrust), and $a_1 \in [-1, 1]$ controls the lateral engines. 
    \item \textbf{Reward function:} The reward encourages the spacecraft to move closer to the landing pad and make ground contact with its legs. A bonus of $+100$ is awarded for a successful landing, while penalties are applied for excessive speed, large tilt angles, and crashes ($-100$).
\end{itemize}

\textbf{Transition stochasticity (wind):} The spacecraft is perturbed by a stochastic wind signal of the form
\[
w(t) = \tanh\!\big(\sin(2k(t+C)) + \sin(\pi k(t+C))\big),
\]
where $t$ denotes the temporal (sequence) index, $k = 0.01$, $C \sim \text{Uniform}(-9999, 9999)$, and $w(t) \in [-1, 1]$. 
For each trajectory, two independent realizations of $C$ are sampled to generate $w_F(t)$ and $w_T(t)$, which respectively define the lateral wind force and torque applied to the spacecraft. 
The signals are scaled by $\alpha \in [0, 2]$ to control wind strength; we use a relatively strong wind with $\alpha = 1.5$. 
In summary, the spacecraft is perturbed by a lateral force $\alpha w_F(t)$ and a torque $\alpha w_T(t)$, introducing state-transition stochasticity into the dynamics.

\subsection{Experiments}
\label{subsec:exp-lunar-lander}
With \texttt{LunarLander}, our goal is to assess if stochasticity can indeed perturb offline RL algorithms, especially wind-induced state-transition stochasticity. As the action space in this case is continuous, we used SAC\footnote{T. Haarnoja, A. Zhou, P. Abbeel, S. Levine. Soft Actor-Critic: Off-Policy Maximum Entropy Deep Reinforcement Learning with a Stochastic Actor. Proceedings of the 35th International Conference on Machine Learning, PMLR 80:1861-1870, 2018.
\url{https://arxiv.org/abs/1801.01290}} to obtain online an expert policy. To assess stochasticity effects, we trained and evaluated the three Offline RL algorithms under study on \texttt{LunarLander}, both with and without wind. We conducted the trainings and evaluations using the medium-expert datasets and the resulting performance metrics are reported in Table~\ref{tab:results_stoch_comparison}.

These results indicate that CQL outperforms the sequential methods in terms of both mean return and standard deviation, across both deterministic and stochastic settings. Notably, it maintains a clear advantage over DT and CGDT in the high-stochasticity scenario, while all methods struggle to reach expert-level performance (100).

More interestingly, the performance drop between deterministic and stochastic settings highlights each algorithm's robustness to stochastic perturbations. CGDT is the least affected ($\Delta = -0.1$), suggesting that the critic component enhances its robustness to state-transition stochasticity compared to DT. This also helps CGDT narrow the performance gap with CQL in both settings. Conversely, DT experiences a substantial degradation ($\Delta = -6.9$), and CQL, despite its strong overall performance, also suffers a noticeable decline ($\Delta = -4.3$).

In summary, stochasticity does affect Offline RL methods: it notably degrades the performance of DT and CQL, whereas CGDT remains stable across deterministic and stochastic settings. Nonetheless, CQL still achieves the highest overall returns.

\begin{table}[t!]
\centering
\caption{\texttt{LunarLander}: Rescaled mean return and (standard deviation). 
}
\scriptsize
\begin{tabular}{lccc}
\toprule
Model & No Stochasticity & High Stochasticity & $\Delta$ (High–No) \\
\midrule
DT   & 90.4 (16.2) & 83.5 (25.5) & $-6.9$ \\
CGDT & 91.7 (14.5) & 91.6 (19.6) & $\mathbf{-0.1}$
 \\
CQL  & \textbf{101.2 (14.2)} & \textbf{96.9 (17.9)} & $-4.3$ \\
\bottomrule
\end{tabular}
\label{tab:results_stoch_comparison}
\end{table}

\section{Modified Environment \texttt{mobile-env}}

\subsection{Limited Mobility}
\label{app:limited-mobility}
In practice, \texttt{mobile-env} defines a $200 \times 200$ square map. For the limited-stochasticity setting, initial positions are sampled within a circle of radius 20 (one-tenth of the map size), and waypoints within a circle of radius 10 (one-twentieth of the map size).

\subsection{Modifications of \texttt{mobile-env}}
\label{app:modifs-mobile-env}
In the paper, we use a modified version of the original \texttt{mobile-env}. We describe here the original RL setup: The key difference lies in the definition of actions. In the original setup, the action is defined at the user level, and does no make use of the per-base station thresholds we defined in our variation. This means for each user, it specifies whether to connect, disconnect, or do nothing. The action per user is an integer in $\left\{0,\ldots, |B_s|\right\}$, with $|B_s|$ the total number of base stations. More precisely, $0$ indicates no change, while $i$ triggers either a connection or a disconnection with base station $i$. Thus, each user can trigger at most one connection or disconnection per timestep, leading to an overall action space of size $|B_s|^{|U_e|+1}$.  

The original version for the state space includes the binary user connection indicators, rather than the per-BS thresholds we involved in the new version. For each user, it contains (i) a binary vector of size $|B_s|$ indicating the current connections, (ii) the SNR values with each base station, and (iii) the user’s previous utility. The total state vector dimension is therefore $|U_e|\times(2|B_s|+1)$. 

The reward computation stays unchanged in both versions.  

The main computational reason why we decided to work with per-BS thresholds for the association, instead of the original per-UE decision is that the original problem quickly becomes intractable, as the action space grows exponentially with the number of users. This makes efficient exploration nearly impossible and the RL problem overly challenging. Our threshold-based formulation resolves this issue by reducing the action space to $3^{|B_s|}$, while also simplifying the state representation by removing explicit connection indicators (since they are now fully determined by the thresholds). Besides, in real systems low-level association decisions work very similarly to the logic we introduce in the modified version, i.e. with per-BS thresholds that nudge the users to connect to one BS or another based on their relative signal quality. 

We also modified the way SNR values are scaled. In the original version, SNRs are normalized per user, i.e., each user’s SNR values are divided by their own maximum SNR. We believe this makes comparisons across users more difficult and may harm the policy’s learning. Since scaling remains important, we instead introduced fixed lower and upper SNR references to rescale values within these bounds. For the upper reference, we used the SNR corresponding to a user located very close to a base station, while for the lower reference, we used the disconnection frontier, below which mobile-env automatically disconnects the user. 

\subsection{Rewards}\label{app:mobile-env-rewards}

Here, we expose clearly how the formulas behind reward computation work in \texttt{mobile-env}.
We denote base stations by the index $i$ and users by the index $j$. 
Each station--user pair has a signal-to-noise ratio (SNR) denoted by $\gamma_{ij}$. 
The maximum achievable data rate for user $j$ at station $i$ is 
\begin{equation}{\label{data_rate_defintion}}   
d_{ij} = b \, \log\!\bigl(1 + \gamma_{ij} \bigr),
\end{equation}

where $b$ is the bandwidth, assumed identical across stations and users. From a scheduling perspective users share time-slots but not frequency-slots (bandwidth).

We chose to use \textit{RateFair} scheduling policy from \texttt{mobile-env}, meaning each connected user receives the same rate from a given base station. 
Formally, the equal share for station $i$ is obtained as the harmonic mean of the data rates of the users connected to this station:
\begin{equation}
\frac{1}{\sum_j \tfrac{c_{ij}}{d_{ij}}},
\end{equation}
where the connection indicator is defined by
\begin{equation}
    c_{ij} = \mathbf{1}_{\{\gamma_{ij} \geq \tau_i\}}, \qquad c_{ij} \in \{0,1\},
\end{equation}

with $\tau_i$ the connection threshold of base station $i$. Thus, only connected users contribute to the denominator, and each receives the same share.  

The per-user aggregated rate across all stations is given by
\[
f_j(\Gamma) = \frac{b}{n_j} \sum_i c_{ij} a_{ij} \log\big(1 + \gamma_{ij}\big),
\]
where $a_{ij}$ denotes the fraction of the maximum achievable data rate $d_{ij}$ allocated to user $j$ from station $i$, 
$n_j = \sum_i c_{ij}$ is the number of stations connected to user $j$, 
and $\Gamma = (\gamma_{ij})_{i,j} \in \mathbb{R}^{|B_s| \times |U_e|}$ is the matrix of SNR values for all station–user pairs, 
with $|B_s|$ and $|U_e|$ denoting the total number of base stations and users, respectively.  

For instance, under the \textit{RateFair} scheduling policy, we have
\[
a_{ij} = \frac{1}{\sum_k c_{ik} \tfrac{d_{ij}}{d_{ik}}}.
\]
The aggregated rates are then mapped into corresponding utility values.


First, a utility function is applied:  
\[
g(d) = \operatorname{clip}\!\left( w_1 \frac{\log(w_2 + d)}{\log(w_3)}, \, l, \, u \right),
\]
with parameters $w_1, w_2, w_3, l, u$. 
Next, utilities are normalized to $(0,1)$ using,  
\[
h(x) = \frac{x - l}{u - l}.
\]  

Finally, in \texttt{mobile-env}, the reward is defined as the average user utility across users:  
\begin{equation}
r = \frac{1}{|U_e|}\sum_j (h \circ g \circ f_j)(\Gamma).    
\end{equation}

\subsection{Channel Fading}
\subsubsection{Additional Types of Channel Fading}
Channel fading is modeled as a random perturbation to the users’ SNR, capturing the effect of environmental factors such as obstacles (e.g., buildings) on signal propagation. In addition to the Rayleigh fading presented earlier, we also consider two Rician fading settings, with $K = 3$ and $K = 10$. Rician fading models scenarios where the signal reaches the receiver through multiple paths while also including a direct LoS component. Rayleigh fading can be viewed as the special case of Rician fading with no LoS path. Moreover, larger Rice factors $K$ correspond to stronger LoS components and therefore reduced randomness.

\subsubsection{Fading Formulas}
\label{app:fading-examples}

Fading impacts the rewards through its effect on $\gamma$, the signal-to-noise ratio (SNR):
$$
\gamma_{\text{new}} = \gamma_{\text{old}} \, |H|^{2},
$$
where $H$ is a random fading coefficient, drawn from a distribution that depends on the fading model under consideration.

In this work, we study two classical fading models: Rayleigh fading and Rician fading, with parameters $K=3$ and $K=10$.

\textit{Rayleigh fading.}  
This model applies in the absence of a Line-of-Sight (LoS) component, where the received signal results from the sum of many scattered multipath components.  
The fading amplitude $H$ then follows a Rayleigh distribution with probability density function (pdf):  
\begin{equation*}
p_{H}(h) = \frac{2h}{\Omega} e^{-h^{2}/\Omega}, \quad h \geq 0,    
\end{equation*}
where $\Omega = \mathbb{E}[H^{2}] = 1$ denotes the average received power.


\textit{Rician fading.}  When a direct LoS path exists in addition to scattered multipath components, the fading amplitude $H$ follows a Rician distribution. Its probability density function (pdf) is given by:  

\begin{equation*}
\scalebox{0.9}{$
   f_{H}(h) = 
   \frac{2(K+1)h}{\Omega} 
\exp\!\left(-K - \frac{(K+1)h^{2}}{\Omega}\right) 
I_{0}\!\left( 2h \sqrt{\tfrac{K(K+1)}{\Omega}} \right), 
$}
\end{equation*}

where $I_{0}(\cdot)$ is the 0th-order modified Bessel function of the first kind. The parameter $K \geq 0$ represents the \emph{Rician $K$-factor}, which quantifies the ratio between the power in the direct LoS component and the power in the scattered components. We consider two representative cases: $K=3$ and $K=10$, with $\Omega = 1$. Higher $K$ values correspond to stronger LOS conditions, whereas $K=0$ reduces to the Rayleigh fading model.

\subsubsection{Fading Formulation and Property}
\label{app:fading}
Fading modifies the reward as follows. 
For each (user, base station) pair $(i,j)$, there is a fading random variable for the channel power: $|H_{ij}|^2$. 
The new SNR coefficients are $\gamma_{ij} \, |H_{ij}|^2,$ where $\gamma$ is current SNR without fading.
We denote the full SNR matrix under fading by $\Gamma \otimes |H|^2,$
where $\otimes$ denotes elementwise multiplication and where $\Gamma = (\gamma_{ij})_{i,j}$ and $|H|^2 = (|H|_{ij})_{i,j}$.  
The resulting return is now a random variable $R$:  
\[
R = \frac{1}{|U_e|}\sum_j (h \circ g \circ f_j)\!\left(\Gamma \otimes |H|^2 \right),
\]  
where $|U_e|$ is the number of users, and where h, g, and f are defined as in previous reward subsection (Section \ref{app:mobile-env-rewards}).
In contrast, the baseline reward without fading is deterministic:  
\[
r = \frac{1}{|U_e|}\sum_j (h \circ g \circ f_j)(\Gamma).
\]  

We next prove that the expected reward under fading is always less than or equal to the baseline reward, for a general class of scheduling policy where the data rate allocation fraction $a_{ij}$ doesn't depend on the SNRs $\Gamma$.

\begin{theorem}
The expected reward under fading is always less than or equal to the baseline reward:  
\[
\mathbb{E}[R] \leq \mathbb{E}[r] = r.
\]  
\end{theorem}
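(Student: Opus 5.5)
The argument is Jensen's inequality applied one link of the reward chain at a time. By linearity of expectation over users, it suffices to show that for each user $j$,
\[
\mathbb{E}\bigl[(h\circ g\circ f_j)(\Gamma\otimes|H|^2)\bigr]\le (h\circ g\circ f_j)(\Gamma).
\]
I will treat the connection indicators $c_{ij}$ and the allocation fractions $a_{ij}$ as fixed. The paper states that fading only enters the reward computation, so the thresholds and the associations are determined by the unfaded state SNRs. The theorem also assumes a scheduling policy for which $a_{ij}$ does not depend on $\Gamma$. Under these conditions, the only randomness sits inside the logarithms. The fading model gives $\mathbb{E}[|H_{ij}|^2]=\Omega=1$.

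\textbf{Step 1 (the rate map $f_j$).} Since $x\mapsto\log(1+\gamma_{ij}x)$ is concave, Jensen's inequality gives
\[
\mathbb{E}\bigl[\log(1+\gamma_{ij}|H_{ij}|^2)\bigr]\le\log\bigl(1+\gamma_{ij}\mathbb{E}|H_{ij}|^2\bigr)=\log(1+\gamma_{ij}).
\]
The map $f_j$ is a nonnegative combination, with weights $\tfrac{b}{n_j}c_{ij}a_{ij}\ge 0$, of these terms. Hence
\[
\mathbb{E}\bigl[f_j(\Gamma\otimes|H|^2)\bigr]\le f_j(\Gamma).
\]

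\textbf{Step 2 (the utility map $g$).} Write $\phi=h\circ g$. The map $h$ is affine and increasing because $u>l$. The inner function $d\mapsto w_1\log(w_2+d)/\log(w_3)$ is concave and nondecreasing, provided $w_1/\log w_3>0$, which holds for the mobile-env defaults. Clipping to $[l,u]$ is the hard point, because the composition of a clip with a concave increasing function need not be concave: the lower clip $\max(\cdot,l)$ breaks concavity. I would first check whether the lower clip is inactive on the range of rates, for instance because $w_1\log(w_2)/\log(w_3)\ge l$ at $d=0$. If so, then $\phi(d)=\min(\text{concave increasing},\text{const})$, and the minimum of concave functions is concave, so $\phi$ is concave and nondecreasing. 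If the lower clip can be active, I would state this as an explicit assumption on the utility parameters, or restrict to the regime where utilities stay above $l$. Establishing the concavity of $\phi$ is the main obstacle of the proof.

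\textbf{Step 3 (chaining).} With $X=f_j(\Gamma\otimes|H|^2)$, concavity of $\phi$ and Jensen give $\mathbb{E}[\phi(X)]\le\phi(\mathbb{E}X)$. Monotonicity of $\phi$ together with Step 1 then gives $\phi(\mathbb{E}X)\le\phi(f_j(\Gamma))$. Averaging over $j$ yields $\mathbb{E}[R]\le r$. Finally, $r$ is deterministic given $\Gamma$, so $\mathbb{E}[r]=r$. If the state $\Gamma$ is itself random because of mobility, the same inequality holds conditionally on $\Gamma$, and the tower property extends it to the unconditional expectation.
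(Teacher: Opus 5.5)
Your proof is essentially the paper's. The paper writes $f_j$ as a separable sum of the concave terms $c_{ij}a_{ij}\log(1+\gamma_{ij})$, asserts that $h\circ g$ is concave and non-decreasing, deduces that $h\circ g\circ f_j$ is concave in $\Gamma$, and applies Jensen once at $\Gamma\otimes|H|^2$ using $\mathbb{E}|H_{ij}|^2=1$. Your termwise Jensen, followed by Jensen and monotonicity for $\phi=h\circ g$, is the same composition rule unrolled. The paper also treats $c_{ij}$, $n_j$ and $a_{ij}$ as constants without saying so. If $c_{ij}$ were computed from the faded SNR, each term would jump at $\tau_i$ and fail to be concave, so stating this explicitly is an improvement.

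Your worry about the lower clip is justified. It points to a gap in the paper's proof, not in yours: the paper never checks that $h\circ g$ is concave, and without an extra hypothesis the statement is false. Let $\psi(d)=w_1\log(w_2+d)/\log w_3$ and suppose $\lim_{d\downarrow 0}\psi(d)<l$, so that $d^\ast=\psi^{-1}(l)>0$. This holds whenever $w_2=0$, as with mobile-env's default coefficients $(w_1,w_2,w_3)=(10,0,10)$, so your first branch does not cover the paper's setting.

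Now take a configuration with at least one connected user in which every baseline rate satisfies $f_j(\Gamma)<d^\ast$. All baseline utilities are clipped to $l$, so $r=0$. Under Rayleigh fading, $|H_{ij}|^2\sim\mathrm{Exp}(1)$ has unbounded support, so a connected user's faded rate exceeds $d^\ast$ with positive probability. Since normalized utilities are nonnegative, $\mathbb{E}[R]>0=r$.

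The hypothesis must be the parameter condition $\lim_{d\downarrow 0}\psi(d)\ge l$. Your alternative of restricting to states whose baseline utility exceeds $l$ does not work. For $f_j(\Gamma)$ just above $d^\ast$, continuity gives the same violation. And because $|H|^2$ puts mass near $0$, faded utilities stay above $l$ almost surely only under that parameter condition. With it, your argument is complete.

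As an aside, the restriction that $a_{ij}$ not depend on $\Gamma$ excludes the RateFair scheduler the paper actually uses, but it can be dropped. Under RateFair the common per-station rate $\bigl(\sum_k c_{ik}/d_{ik}\bigr)^{-1}$ is concave and non-decreasing in the $d_{ik}$. So $f_j$ is still jointly concave in $\Gamma$, and a single multivariate Jensen step replaces your termwise one.
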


\begin{proof}
We begin by showing that, for all users $j$, the function $h \circ g \circ f_j$ is concave.  

First, we establish that $f_j$ itself is concave. Recall that $f_j : \mathbb{R}^{|B_s| \times |U_e|} \rightarrow \mathbb{R}$.  
The function $f_j$ can be written as a separable sum over the variables $\gamma_{ij}$:
\[
f_j(\Gamma) = \frac{b}{n_j} \sum_i \phi_{ij}(\gamma_{ij}),
\]
where $\phi_{ij}(\gamma_{ij}) = c_{ij} a_{ij} \log\big(1 + \gamma_{ij}\big)$.  

It is sufficient to show that each $\phi_{ij}$ is concave, which follows directly from the concavity of the logarithm function. Hence, $f_j$ is concave.  
Finally, since $h \circ g$ is a concave and non-decreasing function, the composition $h \circ g \circ f_j$ is also concave.

Since, for all users $j$, the function $h \circ g \circ f_j$ is concave, Jensen’s inequality gives  
\begin{align*}
    \mathbb{E}\!\left[(h \circ g \circ f_j)(\Gamma \otimes |H|^2)\right] 
    &\leq (h \circ g \circ f_j)\!\left(\mathbb{E}[\Gamma \otimes |H|^2]\right) \\
    &= (h \circ g \circ f_j)\!\left(\Gamma \otimes \mathbb{E}[|H|^2]\right) \\
    &= (h \circ g \circ f_j)(\Gamma),
\end{align*}
where the inequality is by Jensen, the first equality uses the fact that $\Gamma$ is deterministic, and the second equality follows from $\mathbb{E}[|H_{ij}|^2] = 1$ for all users $j$ and stations $i$.  

Taking the average over all users yields  
\begin{align*}
\mathbb{E}[R] &= \frac{1}{|U_e|}\sum_j 
\mathbb{E}\!\left[(h \circ g \circ f_j)(\Gamma \otimes |H|^2)\right]\\
&\leq \frac{1}{|U_e|}\sum_j (h \circ g \circ f_j)(\Gamma) = r.
\end{align*}
\end{proof}

\subsection{Additional Fading Results}

To illustrate the previous property, we report the expert dataset mean return under different fading conditions. Table~\ref{tab:fading-expert-results} shows how the expert performance varies with the fading model. 

\begin{table}[h!]
\centering
\caption{Expert dataset under different fading conditions.}
\scriptsize
\begin{tabular}{lcccc}
\toprule
 & Rayleigh & Rician $K=3$ & Rician $K=10$ & No fading \\
\midrule
Expert Mean & 49.01 & 57.79 & 62.64 & 67.74 \\
\bottomrule
\end{tabular}
\label{tab:fading-expert-results}
\end{table}

These values correspond to the upper bounds used for rescaling. As stochasticity increases (from no fading to Rayleigh fading), the expert performance decreases. This aligns with the Jensen-type inequality proved above: the expected reward of an agent under fading is lower than in the no-fading case. The empirical results confirm this trend, with performance consistently decreasing as fading becomes more random.

\begin{table}[h]
\centering
\caption{\texttt{mobile-env}: Performance under the \textbf{high mobility} stochasticity setting with different fading models.
} 
\scriptsize
\begin{tabular}{lcccc}
\toprule
Setting & Model & Mean Return (Std) & $\Delta$ vs. no fading\\
\midrule
           & DT   & 47.71  (65.57) & -33.8\\
Rayleigh   & CGDT & 80.38  (62.14) & -2.62 \\
           & \textbf{CQL}  & \textbf{93.97 (64.35)} & \textbf{+9.17} \\
\hline
             & DT   & 84.18 (56.06) & +2.68 \\
Rician $K=3$ & CGDT & 82.35 (55.88) & -0.65 \\
             & \textbf{CQL}  & \textbf{89.29 (55.29)} &  +4.49\\
\hline
              & DT   & 83.60 (53.71) & +2.10 \\
Rician $K=10$ & CGDT & 77.52 (56.89) & -5.48 \\
              & \textbf{CQL}  & \textbf{92.58 (51.31)} & +7.78\\
\bottomrule
\end{tabular}
\label{tab:fading-results-complete}
\end{table}

We now analyze the additional results under Rician fading ($K=3$ and $K=10$), as shown in Table~\ref{tab:fading-results-complete}. 
CQL remains the most consistent method, achieving the highest mean and lowest variance of returns across all settings. Sequential methods also perform reasonably well. Under Rician fading, DT is less affected and performs comparably well, suggesting that return conditioning can still effectively guide action selection when perturbations are moderate. Interestingly, however, CGDT performs worse than DT in these milder conditions. This counterintuitive result may stem from CGDT’s higher tuning complexity—requiring careful adjustment of the asymmetry parameter, critic weight, and “beating dataset return” parameter during training—which makes it more sensitive to hyperparameter configurations than DT or CQL. 


Note here that, the hyperparameter search related to the CGDT method for this type of fading was not exhaustive, although we tested a number of combinations for the four hyperparameters $(L, \tau_c, \tau_p, \alpha)$. This observation also highlights that as stochasticity increases, hyperparameter tuning becomes more demanding. For methods like CGDT, which involve multiple tuning parameters, this can be particularly challenging, whereas DT and CQL are more robust in this regard.

\section{Hyperparameter selection for \texttt{mobile-env}}



Hyperparameters for \texttt{mobile-env} scenario are summarized in Table~\ref{tab:hyperparameters_mobile_env}.

\begin{table}[h!]
\centering
\caption{Summary of hyperparameters across all \texttt{mobile-env} scenarios.}
\begin{tabular}{ll}
\toprule
\textbf{Scenario} & \textbf{Hyperparameters} \\
\midrule
Limited mobility &
\begin{tabular}[t]{@{}l@{}}
DT: $L = 5$ \\
CGDT: $L = 5, \; \tau_c = 0.5, \; \tau_p = 0.8, \; \alpha = 0.1$ \\
CQL: $\alpha = 1.0, \; l_r = 3\times10^{-4}$ \\
QDT: $L = 5$
\end{tabular} \\
\midrule
High mobility &
\begin{tabular}[t]{@{}l@{}}
DT: $L = 10$ \\
CGDT: $L = 10, \; \tau_c = 0.5, \; \tau_p = 0.8, \; \alpha = 0.2$ \\
CQL: $\alpha = 0.1, \; l_r = 3\times10^{-4}$ \\
QDT: $L = 10$
\end{tabular} \\
\midrule
Rayleigh / Rice fading &
Same as high mobility stochasticity \\
\bottomrule
\end{tabular}
\label{tab:hyperparameters_mobile_env}
\end{table}

    


\section{Additional \texttt{mobile-env} Experiments}

\subsection{Return Conditioned Plots}
\label{app:return-conditioned-plots}
\begin{figure*}[ht!]
    \centering
    \subfloat[DT]{
        \includegraphics[width=0.45\linewidth]{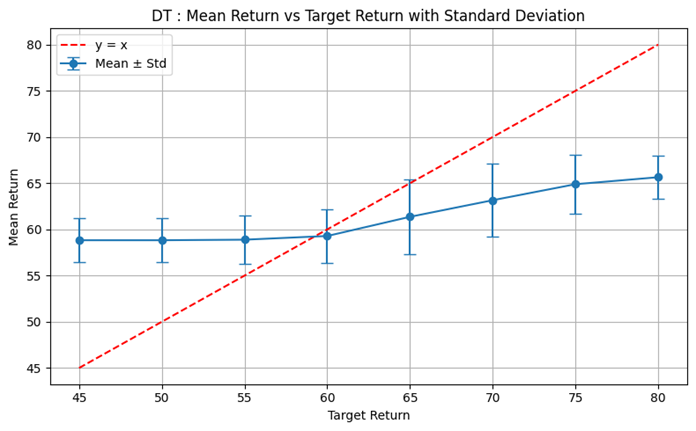}
        }
    \hfill
    \subfloat[CGDT]{
        \includegraphics[width=0.45\linewidth]{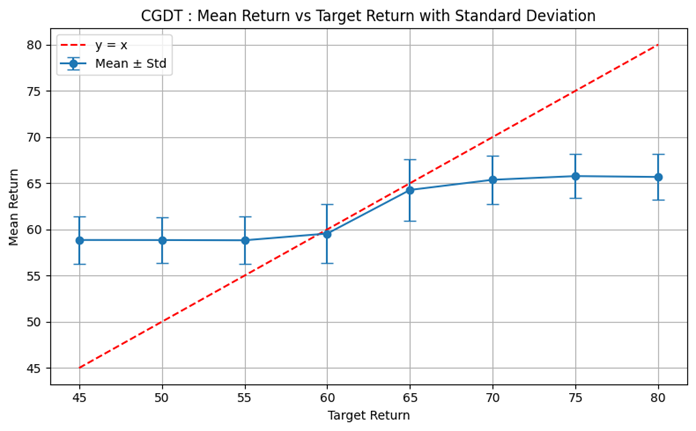}
    }    
    \caption{Target return plots for \texttt{mobile-env} in the limited mobility stochasticity setting with the medium/expert dataset.}
    \label{fig:target_return_plots_limited_stoch}
\end{figure*}

\begin{figure*}[ht!]
    \centering
    \subfloat[DT]{
        \includegraphics[width=0.45\linewidth]{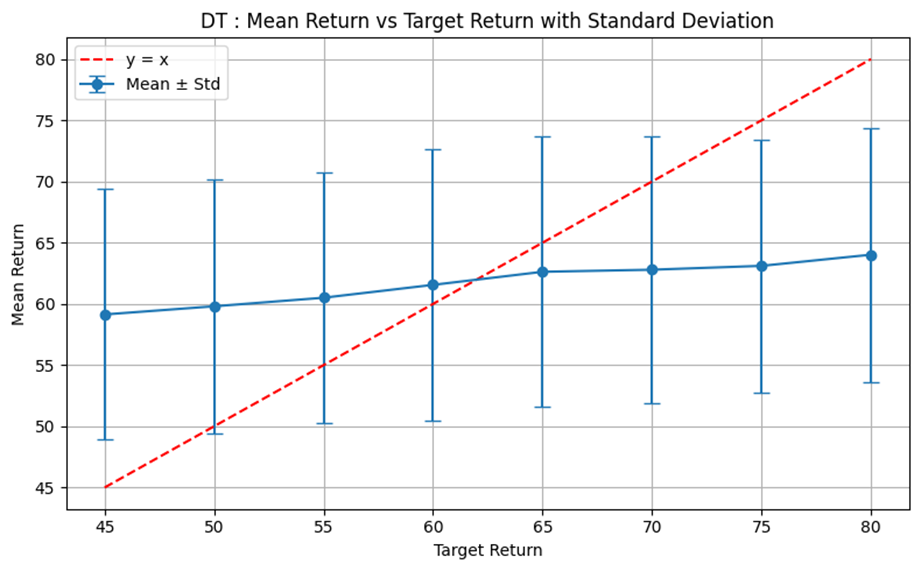}
    }
    \hfill
    \subfloat[CGDT]{
        \includegraphics[width=0.45\linewidth]{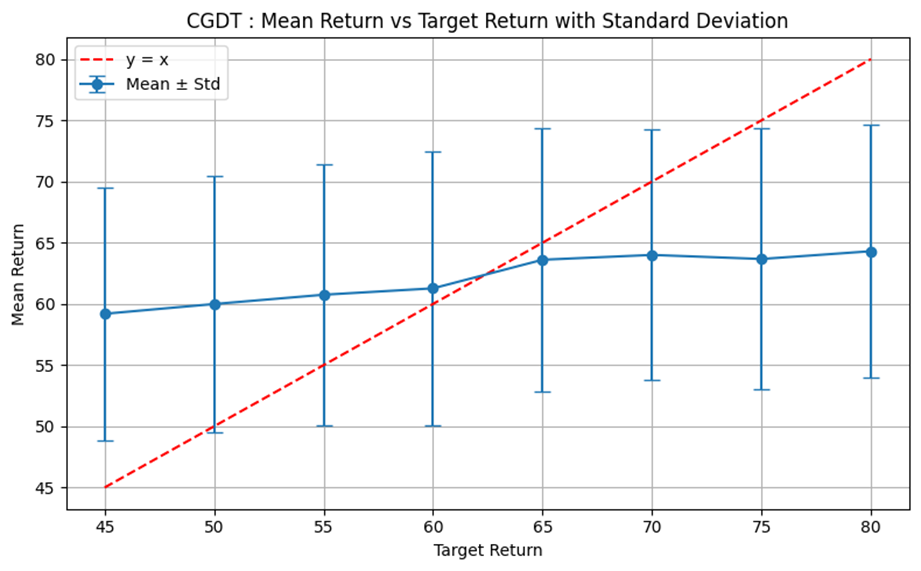}
    }
    \caption{Target return plots for \texttt{mobile-env} in the high mobility stochasticity setting with the medium/expert dataset.}
    \label{fig:target_return_plots_high_stoch}
\end{figure*}

\begin{figure*}[t!]
    \centering
    \subfloat[Limited mobility stochasticity.]{
        \includegraphics[width=0.45\linewidth]{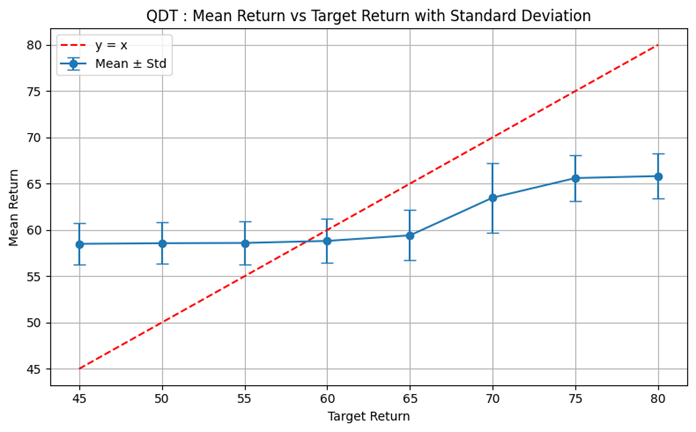}
    }
    \hfill
    \subfloat[High mobility stochasticity.]{
        \includegraphics[width=0.45\linewidth]{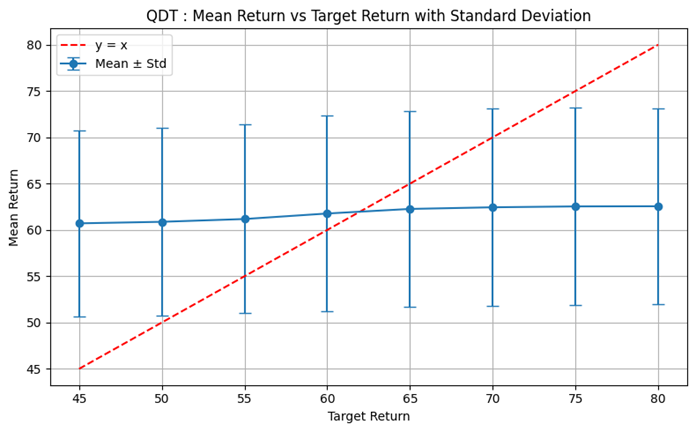}
    }
    \caption{Target return plots of QDT for \texttt{mobile-env} in limited and high mobility stochasticity setting with medium/expert dataset.}
    \label{fig:qdt_target_return_plots}
\end{figure*}

We examined the target return plots for sequential methods, in the limited and high mobility stochasticity settings. These plots display the mean returns obtained when conditioning on different input target returns. Their purpose is to assess the ability of sequential models to follow the specified target and, ultimately, to stitch trajectories in order to achieve the desired performance.

First, let us examine limited mobility case in Figure \ref{fig:target_return_plots_limited_stoch}. The two methods exhibit a similar trend: they begin with a plateau, increase as target returns rise, and then reach a second plateau. This behavior is intuitive, as stitching is possible when sufficient trajectories in the dataset achieve the specified target returns. However, we observe that CGDT tracks the target return more closely than DT. Moreover, for the same target return, CGDT consistently outperforms DT, which aligns with its design goal of achieving returns that match or exceed the observed target return.

We now examine the high-mobility stochasticity case, shown in Figure~\ref{fig:target_return_plots_high_stoch}. 
In this setting, both methods struggle more to follow the input target, confirming that stochasticity perturbs sequential algorithms. 
However, CGDT, aided by its critic, manages to follow the target more closely around the 60-65 return range, where most data are available. 
This again highlights that stochasticity has a stronger negative impact when data are scarce.


\section{Experiments with alternative method: QDT}\label{app:qdt}

\begin{table}[h]
\caption{\textbf{lunar-lander}: Comparison of QDT mean return and (standard deviation) under deterministic and stochastic  environments, against other Offline RL algorithms.}
\centering
\scriptsize
\begin{tabular}{lccc}
\toprule
Model & No Stochasticity & High Stochasticity \\
\midrule
DT   & 90.4 (16.2) & 83.5 (25.5) \\
CGDT & 91.7 (14.5) & 91.6 (19.6)
 \\
CQL  & \textbf{101.2 (14.2)} & \textbf{96.9 (17.9)}  \\
\textcolor{darkgray}{QDT}  & \textcolor{darkgray}{80.0 (26.2)} & \textcolor{darkgray}{80.0 (23.5)} \\
\bottomrule
\end{tabular}
\label{tab:qdt_results_stoch_comparison}
\end{table}

\begin{table}[h]
\caption{\textbf{mobile-env}: Comparison of QDT mean return and (standard deviation) under limited and high mobility stochasticity against other Offline RL algorithms.}
\centering
\scriptsize
\begin{tabular}{lccc}
\toprule
Model & Limited Stochasticity & High Stochasticity \\
\midrule
DT   & 95.1 (9.71) & 81.5 (51.6) \\
CGDT & \textbf{95.6 (9.75)} & 83.0 (51.2)\\
CQL  & 94.6 (13.1) & \textbf{84.8 (46.9)} \\
\textcolor{darkgray}{QDT}  & \textcolor{darkgray}{95.8 (9.88)} & \textcolor{darkgray}{78.3 (49.9)} \\
\bottomrule
\end{tabular}
\label{tab:qdt_results_mob_stoch_comparison}
\end{table}

We also explored an algorithm that combines Bellman-based and sequential Offline RL approaches, called Q-learning Decision Transformer (QDT) \footnote{\url{https://proceedings.mlr.press/v202/yamagata23a.html}}. However, our experiments revealed that it was challenging to achieve satisfactory performance with it. For completeness, we describe the algorithm, report the obtained results, and discuss possible reasons for its struggles.

Q-learning Decision Transformer integrates CQL and Decision Transformer by modifying the trajectory returns. First, a CQL model is fitted on the dataset. Then, a relabeling step is applied using the CQL value function. Starting from the end of each trajectory, returns are recursively relabeled with $R_T = 0$ and
\begin{equation}
    R_{\tau-1} \leftarrow r_{\tau-1} + \max\!\big(R_{\tau}, \hat{V}(s_{\tau})\big),
\end{equation}
where $\hat{V}(\cdot)$ denotes the CQL value function approximation. This relabeling, however, may break consistency with the standard return definition $R_t = r_t + R_{t+1}$. To address this, a second relabeling is introduced: the return-to-go computed in the first step is used as the final return of the input trajectory, while the remaining returns are recomputed using the original dataset rewards. Essentially, this approach modifies only the dataset compared to the standard DT procedure, leaving the rest of the training and inference unchanged. It assumes that the value function is approximated by first running a Bellman-based CQL algorithm before updating the DT.

Tables~\ref{tab:qdt_results_stoch_comparison} and~\ref{tab:qdt_results_mob_stoch_comparison} summarize QDT results on the \texttt{LunarLander} and \texttt{mobile-env} environments, respectively. On \texttt{LunarLander}, QDT performs worse than all other methods under both no stochasticity and high stochasticity. On \texttt{mobile-env}, it outperforms others in low stochasticity settings but lags behind in high stochasticity scenarios. Figure~\ref{fig:qdt_target_return_plots} shows the target return plots for QDT, highlighting its difficulty in following the target under both stochasticity conditions.

During QDT experiments, we observed that the CQL value estimates were often inaccurate. Since QDT relies on these estimates for relabeling, this poses a significant challenge. While CQL captures the relative quality of actions correctly, its absolute value estimates are often unrealistic. Attempts to improve these estimates by increasing CQL training iterations led to instability and worse performance. Given these challenges and the focus of this paper on CQL, DT, and CGDT, we opted to limit the study of QDT.

\end{document}